\DeclareMathAlphabet{\mathcal}{OMS}{cmsy}{m}{n}
\g@addto@macro\bfseries{\boldmath}
\DeclareMathAlphabet\BEuScript{U}{eus}{b}{n}
\newcommand{\Mbb}{{\BEuScript{M}}}
 \newtheorem{thm}{Theorem}[section]
 \newtheorem{axiom}[thm]{Axiom}
 \theoremstyle{definition}
 \newtheorem{definition}[thm]{Definition}
 \theoremstyle{remark}
 \numberwithin{equation}{section}
\newcommand{\CoinX}[1]{C_0^\infty({#1})}
\newcommand{\Af}{{\mathscr{A}}}
\newcommand{\Sf}{{\mathscr{S}}}
\newcommand{\supp}{{\rm supp}\,}
\newcommand{\CC}{{\mathbb C}}
\newcommand{\RR}{{\mathbb R}}
\newcommand{\Lc}{{\mathcal{L}}} 
\newcommand{\Mc}{{\mathcal{M}}}
\newcommand{\Zc}{{\mathcal{Z}}}
\newcommand{\OO}{{\mathcal{O}}}
\newcommand{\Ib}{{\boldsymbol{I}}}
\newcommand{\Lb}{{\boldsymbol{L}}}
\newcommand{\Mb}{{\boldsymbol{M}}}
\newcommand{\Nb}{{\boldsymbol{N}}}
\newcommand{\nto}{\stackrel{.}{\to}}
\newcommand{\id}{{\rm id}}
\newcommand{\II}{{\mathbf{1}}}
\newcommand{\FLoc}{{\sf FLoc}}
\newcommand{\Loc}{{\sf Loc}}
\newcommand{\SpLoc}{{\sf SpinLoc}}
\newcommand{\Alg}{{\sf Alg}}
\newcommand{\CAlg}{{\sf C^*\hbox{-}Alg}}
\newcommand{\Bf}{{\mathscr B}}
\newcommand{\Df}{{\mathscr D}}
\newcommand{\Lf}{{\mathscr L}}
\newcommand{\Tf}{{\mathscr T}}
\DeclareMathOperator{\Aut}{Aut}
\DeclareMathOperator{\Fld}{Fld}
\newcommand{\dvol}{d{\rm vol}}
\DeclareMathOperator{\rce}{rce}
\newcommand{\kin}{\text{kin}}
\DeclareMathOperator{\SL}{SL}
\begin{document}

\title[Spin-statistics connection in curved spacetimes]{On the spin-statistics connection in curved spacetimes}
\author[CJ Fewster]{Christopher J Fewster}
\address{Department of Mathematics, University of York, Heslington, York YO10 5DD, United Kingdom}
\email{chris.fewster@york.ac.uk}
\date{\today}

\begin{abstract}
The connection between spin and statistics is examined in the context of locally covariant
quantum field theory. A generalization is proposed in which locally covariant theories are defined as
functors from a category of framed spacetimes to a category of $*$-algebras. 
This allows for a more operational
description of theories with spin, and for the derivation of a more general version of the spin-statistics
connection in curved spacetimes than previously available. The proof
involves a ``rigidity argument'' that is also applied in the standard setting of 
locally covariant quantum field theory to show how properties such as 
Einstein causality can be transferred from Minkowski spacetime to general curved spacetimes. 
\end{abstract}
\maketitle

\newcommand{\alert}[1]{{\bf #1}}
\newcommand{\pause}{}

\section{Introduction}

\begin{quotation}
\emph{In conclusion we wish to state, that according
to our opinion the connection between spin and
statistics is one of the most important applications
of the special relativity theory.}\\ W. Pauli, in~\cite{Pauli:1940}.
\end{quotation}
 
It is an empirical fact that observed elementary particles are either
bosons of integer spin, or fermions of half-integer spin. 
Explanations of this connection between spin and statistics 
have been sought since the early days of quantum field theory. 
Fierz~\cite{Fierz:1939} and Pauli~\cite{Pauli:1940} investigated
the issue in free field theories, setting in train a number of
progressively more general results. The rigorous proof of a
connection between spin and statistics was an early and major
achievement of the axiomatic Wightman framework; see
\cite{Burgoyne:1958, LuedersZumino:1958} and the classic presentation in ~\cite{StreaterWightman}. Similarly, general results
have been proved in the Haag--Kastler framework~\cite{Haag}, 
for example, \cite{Epstein:1967,DHRiv,GuidoLongo:1995}. In these more
algebraic settings, statistics is not tied to the properties of particular fields, 
but rather understood in terms of the graded commutativity of local
algebras corresponding to spacelike-separated regions~\cite{Epstein:1967}, or the properties
of superselection sectors \cite{DHRiv,GuidoLongo:1995}.
 
Nonetheless, the theoretical account of the spin-statistics connection is subtle and even
fragile. Nonrelativistic models of quantum field theory are not
bound by it, and as Pauli observed~\cite{Pauli:1940}, one may
impose bosonic statistics on a Dirac field at the cost of
sacrificing positivity of the Hamiltonian. Ghost fields introduced
in gauge theories violate the connection, but also involve indefinite
inner products. The rigorous proofs therefore rely on Hilbert space
positivity and energy positivity. Moreover, they 
make essential use of the 
Poincar\'e symmetry group and  its complex extension
together with analyticity properties of the vacuum $n$-point
functions. The spin-statistics connection observed in nature, 
however, occurs in a spacetime which is not Minkowski space
and indeed has no geometrical symmetries. There is neither a global 
notion of energy positivity (or, more properly, the spectrum condition) nor do we expect $n$-point functions in typical states of interest
on generic spacetimes to have analytic  extensions. Thus the general
proofs mentioned have no traction and it is far from clear how they can be generalized: 
\emph{a priori} it is quite conceivable
that the theoretical spin-statistics connection is an accident of special relativity that is broken in passing to the curved spacetimes of general
relativity.

Indeed, for many years, work on the spin-statistics
connection in curved spacetimes was restricted
to demonstrations that free models become inconsistent
on general spacetimes if equipped with the wrong
statistics (e.g., imposing anticommutation relations
on a scalar field)  \cite{Wald_Smatrix:1979,ParkerWang:1989} 
unless some other important
property such as positivity is sacrificed~\cite{HiguchiParkerWang:1990}.

The breakthrough was made by Verch \cite{Verch01},
who established a general spin-statistics theorem for 
theories defined on each spacetime by a single field which, in particular,
obeys Wightman axioms in Minkowski space. Together with \cite{Ho&Wa02}, this paper was responsible for laying down many of the 
foundations of what has become the locally covariant framework for QFT in curved spacetimes~\cite{BrFrVe03}. 
Verch's assumptions allow certain properties of the theory on 
one spacetime to be deduced from its properties on another,
provided the spacetimes are suitably related by restrictions
or deformations of the metric. In particular, the spin-statistics
connection is proved by noting that if it were violated in any one spacetime, it 
would be violated in Minkowski space, contradicting the classic spin-statistics theorem. 

Nonetheless, there are good reasons to revisit the 
spin-statistics connection in curved spacetime.
First, as a matter of principle, one would like to gain a
better undestanding of why spin is the correct concept
to investigate in curved spacetime, given the lack of
the rotational symmetries that are so closely bound up
with the description of spin in Minkowski space. 
A second, related, point is that \cite{Verch01} describes spinor fields 
as sections of various bundles associated to the spin bundle. 
While this is conventional wisdom in QFT in CST,  it has the effect of
basing the discussion on geometric structures that are, in part, unobservable.
This is not a great hindrance if the aim is to discuss a particular model  such as
the Dirac field. However, we wish to understand the spin-statistics
connection for general theories, without necessarily basing the description
on fields at all. With that goal in mind, one needs a more fundamental
starting point that avoids the insertion of spin by hand. Third, the result proved in \cite{Verch01} is confined to theories in which the algebra
in each spacetime is generated by a single field, and the argument
is indirect in parts. The purpose of this contribution is to sketch a new and
operationally well-motivated perspective on the spin-statistics connection 
in which spin emerges as a natural concept in curved spacetimes,
and which leads to a more general and direct proof of
the connection. In particular, there is no longer any need to
describe the theory in terms of one or more fields. 
Full details will appear shortly~\cite{Few_spinstats}.

The key ideas are (a) a formalisation of the reasoning underlying \cite{Verch01} 
as a `rigidity argument', and (b) a generalization of locally covariant QFT based on a category of spacetimes with global coframes
(i.e.,  a `rods and clocks' account of  spacetime measurements). As in \cite{Verch01}
the goal is to prove that a spin-statistics connection in curved spacetime is implied by the standard results holding in Minkowski space; 
however, the proof becomes quite streamlined in the new formulation.
We begin by describing the standard version of locally covariant QFT, describing the 
rigidity argument and some of its other applications in that context, before moving
to discussion of framed spacetimes and the spin-statistics theorem.

\section {Locally covariant QFT}

Locally covariant QFT is a general framework for QFT in curved spacetimes, due to Brunetti, Fredenhagen and Verch (BFV) \cite{BrFrVe03},
which comprises three main assumptions. The first is the assertion that any quantum field theory respecting locality and covariance
can be described a covariant functor $\Af:\Loc\to \Alg$ from the category of globally hyperbolic spacetimes $\Loc$ to a category $\Alg$ of unital $*$-algebras.\footnote{Other target categories are often used, e.g., the unital $C^*$-algebra category $\CAlg$, and other types of physical theory can be 
accommodated by making yet other choices.}

This assumption already contains a lot of information and we shall unpack it in stages,
beginning with the spacetimes. Object of $\Loc$ are oriented and time-oriented globally hyperbolic spacetimes (of fixed dimension $n$, which we will take to be four) and with finitely many components.\footnote{It is convenient to describe 
the orientation by means of a connected component of the set of nonvanishing 
$n$-forms, and likewise to describe the time-orientation by means of connected component of the set of nonvanishing timelike $1$-form fields.
Our signature convention throughout is $+-\cdots-$.}
Morphisms between spacetimes in $\Loc$ are \emph{hyperbolic embeddings}, i.e., isometric embeddings preserving time and space orientations with causally convex image. 

The category $\Alg$
has objects that are unital $*$-algebras, with morphisms that are injective, unit-preserving $*$-homomorphisms. 
The functoriality condition requires that
the theory assigns an object $\Af(\Mb)$ of $\Alg$ to each spacetime $\Mb$ of $\Loc$,
and, furthermore, that each hyperbolic embedding of spacetimes $\psi:\Mb\to\Nb$
is mirrored by an embedding of the corresponding algebras   
$\Af(\psi):\Af(\Mb)\to \Af(\Nb)$, 
such that 
\begin{equation}
\Af(\id_\Mb) = \id_{\Af(\Mb)} \quad\text{and} \quad \Af(\varphi\circ\psi) = \Af(\varphi)\circ \Af(\psi)
\end{equation}
for all composable embeddings $\varphi$ and $\psi$.

Despite its somewhat formal expression, this assumption is well-motiv{-}ated from an
operational viewpoint\footnote{For a discussion 
of how the framework can be motivated on operational grounds (and
as an expression of `ignorance principles')  
see \cite{FewsterRegensburg,Few_Chicheley:2015}.} and provides a natural generalization of the Haag--Kastler--Araki
axiomatic description of quantum field theory in Minkowski space. Indeed, 
as emphasized by BFV, this single assumption already contains 
several distinct assumptions of the Minkowski framework.  

The next ingredient in the BFV framework is the \emph{kinematic net} indexed by $\OO(\Mb)$, the set of all open causally convex subsets of $\Mb$ with finitely many connected components.
Each $O\in\OO(\Mb)$ can be regarded as a spacetime $\Mb|_O$ in its own right, by restricting the 
causal and metric structures of $\Mb$ to $O$, whereupon the inclusion map of $O$ into the underlying
manifold of $\Mb$ induces a $\Loc$ morphism $\iota_{O}:\Mb|_O\to\Mb$ (see Fig.~\ref{fig:inclusion}). The theory $\Af$
therefore assigns an algebra $\Af(\Mb|_O)$ and an embedding of this algebra into $\Af(\Mb)$, 
and we define the kinematic subalgebra to be the image
\begin{equation}
\Af^\kin(\Mb;O) := \Af(\iota_{O})(\Af(\Mb|_O)).
\end{equation}
As mentioned above, the net $O\mapsto\Af^\kin(\Mb;O)$ is the appropriate
generalization of the net of local observables studied in Minkowski space AQFT.  
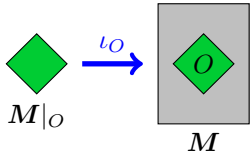
\begin{figure}
\begin{center}
\begin{tikzpicture}[scale=0.4]
\definecolor{Green}{rgb}{0,.80,.20}
\draw[fill=lightgray] (-7,0) ++(-1.5,0) -- ++(3,0) -- ++(0,4) --
++(-3,0) -- cycle;
\draw[fill=Green] (-7,2) +(-1,0) -- +(0,-1) -- +(1,0) -- +(0,1) -- cycle;
\node[anchor=north] at (-7,0) {$\Mb$};
\draw[fill=Green] (-12.5,2) +(-1,0) -- +(0,-1) -- +(1,0) -- +(0,1) -- cycle;
\node at (-7,2) {$O$};
\node[anchor=north] at (-12.5,1) {$\Mb|_O$};
\draw[line width=2pt,->,color=blue] (-11,2) -- (-9,2) node[pos=0.5,above] {$\iota_O$};
\end{tikzpicture}
\end{center}
\caption{Schematic illustration of the kinematic net.}\label{fig:inclusion}
\end{figure}
Some properties are automatic. For instance, the kinematic algebras are covariantly defined, in the sense that
\begin{equation}\label{eq:kincov}
\Af^\kin(\Nb;\psi(O))=\Af(\psi)(\Af^\kin(\Mb;O))
\end{equation}
for all morphisms $\psi:\Mb\to\Nb$ and all nonempty $O\in\OO(\Mb)$. 
This is an immediate consequence of the definitions above and functoriality
of $\Af$. Similarly spacetime symmetries of $\Mb$ are realised as automorphisms
of the kinematic net in a natural way. 

It is usual to assume two additional properties. 
First,  the theory obeys \emph{Einstein causality} if, for all causally disjoint
$O_1,O_2\in\OO(\Mb)$ (i.e., no causal curve joins $O_1$ to $O_2$), the
corresponding kinematic algebras commute elementwise. 
Second, $\Af$ is said to have the \emph{timeslice property} if it 
maps every \emph{Cauchy morphism}, i.e., a morphism
whose image contains a Cauchy surface of the ambient spacetime, to an isomorphism
in $\Alg$.  This assumption encodes the dynamics of the theory and plays an important role
in allowing the instantiations of $\Af$ on different spacetimes to be related.
In fact, any two spacetimes $\Mb$ and $\Nb$ in $\Loc$ can be linked by a chain of Cauchy morphisms
if and only if their Cauchy surfaces are related by an orientation-preserving
diffeomorphism (see \cite[Prop.~2.4]{FewVer:dynloc_theory}, which builds on an
older argument of Fulling, Narcowich and Wald~\cite{FullingNarcowichWald}).
The construction used is shown schematically in Fig.~\ref{fig:deform}: the
main point is the construction of the interpolating spacetime $\Ib$
that `looks like' $\Nb$ in its past and $\Mb$ in its future.
\begin{figure}
\begin{center}
\begin{tikzpicture}[scale=0.6]
\definecolor{Green}{rgb}{0,.80,.20}
\definecolor{Gold}{rgb}{.93,.82,.24}
\definecolor{Orange}{rgb}{1,0.5,0}
\draw[fill=Gold] (-7,2) +(-1,0) -- +(0,-1) -- +(1,0) -- +(0,1) -- cycle;
\draw[fill=Gold] (-3,2) +(-1,0) -- +(0,1) -- +(1,0) ..  controls +(0,0.5) .. +(-1,0);
\draw[fill=Gold] (1,2) +(-1,0) -- +(0,1) -- +(1,0) ..  controls +(0,0.5) .. +(-1,0);
\draw[fill=Green] (1,2) +(-1,0) -- +(0,-1) -- +(1,0) ..  controls +(0,-0.5) .. +(-1,0);
\shadedraw[top color =Gold,bottom color=Green] (1,2) +(-1,0) ..  controls +(0,0.5) .. +(1,0) ..  controls +(0,-0.5) .. +(-1,0);
\draw[fill=Green] (5,2) +(-1,0) -- +(0,-1) -- +(1,0) ..  controls +(0,-0.5) .. +(-1,0);
\draw[fill=Green] (9,2) +(-1,0) -- +(0,-1) -- +(1,0) -- +(0,1) -- cycle;
\draw[color=blue,line width=4pt,->] (-4,2.5) -- (-6,2.5);
\draw[color=blue,line width=4pt,<-] (0,2.5) -- (-2,2.5);
\draw[color=blue,line width=4pt,->] (4,1.5) -- (2,1.5);
\draw[color=blue,line width=4pt,->] (6,1.5) -- (8,1.5);
\node[anchor=north] at (-7,1) {$\Mb$};
\node[anchor=north] at (1,1) {$\Ib$};
\node[anchor=north] at (9,1) {$\Nb$};
\end{tikzpicture}
\end{center}
\caption{Schematic representation of spacetime deformation.} \label{fig:deform}
\end{figure}
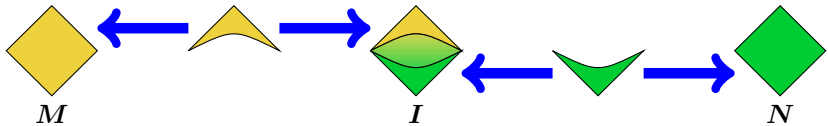
The assumption that $\Af$ has the timeslice property entails the existence
of an isomorphism between $\Af(\Mb)$ and $\Af(\Nb)$; indeed, there
are many such isomorphisms, because there is considerable freedom in the
choice of interpolating spacetime, none of which can be regarded as canonical. 

The assumptions just stated are satisfied by simple models, such
as the free Klein--Gordon field~\cite{BrFrVe03}, and, importantly, for
perturbatively constructed models of a scalar field with self-interaction \cite{BrFr2000,Ho&Wa01,Ho&Wa02}. In order to be self-contained, we briefly describe the free theory corresponding to the minimally coupled Klein--Gordon theory,  with field equation $(\Box_\Mb+m^2)\phi=0$:  in each spacetime $\Mb\in\Loc$, one defines a unital $*$-algebra $\Af(\Mb)$
with generators $\Phi_\Mb(f)$ (`smeared fields') labelled by test functions $f\in\CoinX{\Mb}$ and subject to the following relations:
\begin{itemize}
\item $f\mapsto\Phi_\Mb(f)$ is linear
\item $\Phi_\Mb(f)^*=\Phi_\Mb(\overline{f})$
\item $\Phi_\Mb((\Box_\Mb+m^2)f) = 0$
\item $[\Phi_\Mb(f),\Phi_\Mb(f')] = iE_\Mb(f,f')\II_{\Af(\Mb)}$
\end{itemize}
where 
\begin{equation}
E_\Mb(f,f')= \int_\Mb f(p) \left((E_\Mb^--E_\Mb^+)f'\right)(p)\dvol_\Mb(p)
\end{equation}
is constructed from the advanced ($-$) and retarded ($+$) Green operators
(which obey $\supp(E_\Mb^\pm f)\subset J^\pm_\Mb(\supp f)$). This defines the objects of the theory; for the morphisms,
any hyperbolic embedding $\psi:\Mb\to\Nb$ determines a unique morphism
$\Af(\psi):\Af(\Mb) \to\Af(\Nb)$ with the property
\begin{equation}
\Af(\psi)\Phi_\Mb(f)=\Phi_\Nb(\psi_* f) \qquad (f\in\CoinX{\Mb})
\end{equation}
The proof that $\Af(\psi)$ is well-defined as a morphism of $\Alg$ relies on the
properties of globally hyperbolic spacetimes, the definition of hyperbolic embeddings,
and some algebraic properties of the algebras $\Af(\Mb)$ [notably, that they are simple].
 
Our discussion will use two more features of the general structure.
First, the field content is described by natural transformations
between the functor $\Df$ assigning the space of test functions to spacetimes and the functor $\Af$ (suppressing a forgetful functor)
\cite{BrFrVe03}.
This means that to each spacetime $\Mb\in\Loc$ there is a 
map $\Phi_\Mb:C_0^\infty(\Mb)\to\Af(\Mb)$, so that
\begin{equation}
\Af(\psi)\Phi_\Mb(f) = \Phi_\Nb(\psi_* f)
\end{equation}
holds for each $\psi:\Mb\to\Nb$, where $\psi_*$ is push-forward.
The prototypical example is the Klein--Gordon field, of course.
The description just given applies to scalar fields;  fields
of other tensorial types may be incorporated by suitable alternatives choices
of $\Df$. As will be discussed later, spinorial fields require
a modification of the category $\Loc$.

Second, natural transformations may also be used to compare locally covariant theories. A natural $\eta:\Af\nto\Bf$ is interpreted
as an embedding of $\Af$ as a subtheory of $\Bf$, while
a natural isomorphism indicates that the theories are physically
equivalent~\cite{BrFrVe03,FewVer:dynloc_theory}. Naturality
requires that to each $\Mb\in\Loc$ there is a morphism
$\zeta_\Mb:\Af(\Mb)\to
\Bf(\Mb)$ 
\begin{equation}
\zeta_\Nb\circ\Af(\psi) = \Bf(\psi)\circ\zeta _\Mb
\end{equation}
for each morphism  $\psi:\Mb\to\Nb$. The interpretation
of $\eta$ as a subtheory embedding can be justified on several
grounds -- see~\cite{FewVer:dynloc_theory}. 

The equivalences of $\Af$ with itself form the group $\Aut(\Af)$
of automorphisms of the functor. This has a nice physical
interpretation: it is the global gauge group \cite{Fewster:gauge}.

Locally covariant QFT is not merely an elegant formalism
for rephrasing known results and models. It has led to new
departures in the description of QFT in curved spacetimes.
These can be divided into those that are model-independent
and those that are specific to particular theories. Those
of the former type include the spin-statistics connection \cite{Verch01}; the introduction of the relative Cauchy evolution and 
intrinsic understanding of the stress-energy tensor \cite{BrFrVe03};
an analogue of the 
Reeh--Schlieder theorem~\cite{Sanders_ReehSchlieder,Few_split:2015}
and the split property \cite{Few_split:2015}; new approaches
to superselection theory~\cite{Br&Ru05,BrunettiRuzzi_topsect}
and the understanding of global gauge transformations~\cite{Fewster:gauge};
a no-go theorem for preferred states~\cite{FewVer:dynloc_theory},
and a discussion of how one can capture the idea that a theory
describes  `the same physics in all spacetimes' \cite{FewVer:dynloc_theory}. 
Model-specific applications include, above all, the 
perturbative construction of interacting models  \cite{BrFr2000,Ho&Wa01,Ho&Wa02}, including those with 
gauge symmetries \cite{Hollands:2008,FreRej_BVqft:2012}. However, there are
also applications to the theory of Quantum Energy Inequalities  \cite{Few&Pfen06, Marecki:2006,Fewster2007} and  cosmology  \cite{DapFrePin2008,DegVer2010,VerchRegensburg}.

\section{A rigidity argument}\label{sect:rigidity}

The framework of local covariance appears quite loose, but 
in fact the descriptions of the theory in 
different spacetimes are surprisingly tightly related.    
There are various interesting properties 
which, if they hold in Minkowski space, must also hold in general spacetimes.  
This will apply in particular to the spin--statistics connection;
as a warm-up, let us see how such arguments can be used
in the context of Einstein causality, temporarily relaxing our
assertion of this property as an axiom.

For $\Mb\in\Loc$, let $\OO^{(2)}(\Mb)$ be the set of ordered pairs 
of spacelike separated open globally hyperbolic subsets of $\Mb$.
For any such pair $\langle O_1,O_2\rangle\in\OO^{(2)}(\Mb)$, let 
$P_\Mb(O_1,O_2)$ be true if
$\Af^\kin(\Mb;O_1)$ and $\Af^\kin(\Mb;O_2)$ commute elementwise
and false otherwise. 
We might say that $\Af$ satisfies Einstein causality for $\langle O_1,O_2\rangle$. It is easily seen that there are relationships
between these propositions:
\begin{description}
\item[R1] for all $\langle O_1,O_2\rangle\in\OO^{(2)}(\Mb)$,  
\[
P_\Mb(O_1,O_2) \iff P_\Mb(D_\Mb(O_1),D_\Mb(O_2)).
\] 
\item[R2] given $\psi:\Mb\to\Nb$ then, for all $\langle O_1,O_2\rangle\in\OO^{(2)}(\Mb)$, \[P_\Mb(O_1,O_2) \iff P_\Nb(\psi(O_1),\psi(O_2)).
\]
\item[R3] for all $\langle O_1,O_2\rangle\in\OO^{(2)}(\Mb)$
and all $\widetilde{O}_i\in\OO(\Mb)$ with $\widetilde{O}_i\subset O_i$ ($i=1,2$)
\[
P_\Mb(O_1,O_2) \implies P_\Mb(\widetilde{O}_1,\widetilde{O}_2) .
\]
\end{description}
R3 is an immediate consequence of isotony, and R1 follows
from the timeslice property.  Property R2
follows from the covariance property~\eqref{eq:kincov} of the kinematic net, 
which gives 
\begin{equation}
[\Af(\Nb;\psi(O_1)),\Af(\Nb;\psi(O_2))] = \Af(\psi)([\Af(\Mb;O_1),\Af(\Mb;O_2)])
\end{equation}
and the required property holds because $\Af(\psi)$ is injective.
In general, we will describe any collection of logical propositions
$P_\Mb:\OO^{(2)}(\Mb)$ obeying R1--R3 (with $\Mb$ varying over $\Loc$ and $\langle O_1,O_2\rangle\in\OO^{(2)}(\Mb)$) will be called \emph{rigid}. 
 
\begin{thm}\label{thm:rigidity}
Suppose $(P_\Mb)_{\Mb\in\Loc}$ is rigid, and that $P_\Mb(O_1,O_2)$
holds for some $\langle O_1,O_2\rangle\in\OO^{(2)}(\Mb)$. 
Then $P_{\widetilde{\Mb}}(\widetilde{O}_1,\widetilde{O}_2)$ 
for every $\langle\widetilde{O}_1,\widetilde{O}_2\rangle\in\OO^{(2)}(\widetilde{\Mb})$ in every spacetime $\widetilde{\Mb}\in\Loc$ for which either  (a) the Cauchy surfaces of $\widetilde{O}_i$ are oriented diffeomorphic to those of $O_i$ for $i=1,2$; or 
(b) each component of $\widetilde{O}_1\cup \widetilde{O}_2$ has Cauchy surface topology $\RR^{n-1}$.\footnote{For example, 
these components might be Cauchy developments of sets that are diffeomorphic to a $(n-1)$-ball and which lie in a spacelike Cauchy surface.}
\end{thm}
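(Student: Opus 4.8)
The plan is to prove part~(a) in full and then reduce part~(b) to it by an application of R3. Two elementary geometric facts will be used throughout: a disjoint union of Cauchy morphisms is again a Cauchy morphism; and if a subset $S$ of a globally hyperbolic spacetime $\Nb$ contains a Cauchy surface of one of its components, then $D_\Nb(S)$ equals that component. To begin part~(a), I would localise. Since $O_1,O_2$ are spacelike separated, $O_1\cup O_2$ is open, causally convex and has finitely many components, so it underlies a spacetime $\Mb|_{O_1\cup O_2}\in\Loc$ that is the disjoint union of $\Mb|_{O_1}$ and $\Mb|_{O_2}$ and in which $O_1,O_2$ remain spacelike separated. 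Applying R2 to the inclusion $\Mb|_{O_1\cup O_2}\to\Mb$ turns the hypothesis into $P_{\Mb|_{O_1\cup O_2}}(O_1,O_2)$, and applying R2 to $\widetilde{\Mb}|_{\widetilde{O}_1\cup\widetilde{O}_2}\to\widetilde{\Mb}$ shows it is enough to obtain $P_{\widetilde{\Mb}|_{\widetilde{O}_1\cup\widetilde{O}_2}}(\widetilde{O}_1,\widetilde{O}_2)$. Thus everything takes place between two ``doubled'' spacetimes, each carrying a distinguished splitting of its components into a first and a second half, the relevant pair being the pair of halves.

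Next I would deform the two halves simultaneously. By hypothesis~(a) the Cauchy surfaces of $\Mb|_{O_i}$ and of $\widetilde{\Mb}|_{\widetilde{O}_i}$ are orientation-preserving diffeomorphic, so the deformation argument of \cite[Prop.~2.4]{FewVer:dynloc_theory} (Fig.~\ref{fig:deform}) gives, for $i=1,2$, a chain of Cauchy morphisms of a fixed shape linking $\Mb|_{O_i}$ to $\widetilde{\Mb}|_{\widetilde{O}_i}$ through suitable Cauchy neighbourhoods and an interpolating spacetime. Forming componentwise disjoint unions of the two chains yields a chain of Cauchy morphisms of the doubled spacetimes that links $\Mb|_{O_1\cup O_2}$ to $\widetilde{\Mb}|_{\widetilde{O}_1\cup\widetilde{O}_2}$ and respects the splitting at every stage. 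I then propagate the proposition along this chain: if $A\sqcup B$ and $A'\sqcup B'$ are consecutive members, related by a Cauchy morphism in one direction or the other, then from $P_{A\sqcup B}(A,B)$ one obtains $P_{A'\sqcup B'}(A',B')$ by combining R2 for that morphism with R1, the latter being applicable because the image of the smaller spacetime inside the larger one contains a Cauchy surface of the relevant component, hence has Cauchy development equal to that whole component. Starting from $P_{\Mb|_{O_1\cup O_2}}(O_1,O_2)$, the final step of the chain is $P_{\widetilde{\Mb}|_{\widetilde{O}_1\cup\widetilde{O}_2}}(\widetilde{O}_1,\widetilde{O}_2)$, which proves~(a).

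For part~(b), choose a Cauchy surface of each $\Mb|_{O_i}$ and, inside it, $p_i$ pairwise disjoint open balls, where $p_i<\infty$ is the number of connected components of $\widetilde{O}_i$. Their Cauchy developments, formed within $\Mb|_{O_i}$, are pairwise causally disjoint ``diamonds'' contained in $O_i$, each with Cauchy surface diffeomorphic to $\RR^{n-1}$; their union $\widehat{O}_i$ therefore lies in $\OO(\Mb)$ with $\widehat{O}_i\subseteq O_i$ and $\widehat{O}_1,\widehat{O}_2$ spacelike separated, so R3 gives $P_\Mb(\widehat{O}_1,\widehat{O}_2)$. By hypothesis~(b) every component of $\widetilde{O}_i$ is likewise Cauchy diffeomorphic to $\RR^{n-1}$, and since $\RR^{n-1}$ admits an orientation-reversing self-diffeomorphism the Cauchy surfaces of $\widehat{O}_i$ and of $\widetilde{O}_i$ are orientation-preserving diffeomorphic. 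Applying part~(a) with $\langle\widehat{O}_1,\widehat{O}_2\rangle$ in the role of $\langle O_1,O_2\rangle$ now yields $P_{\widetilde{\Mb}}(\widetilde{O}_1,\widetilde{O}_2)$, completing the proof.

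I expect the crux to be the simultaneous deformation in part~(a): the interpolation must be carried out on the two halves at once so that it descends to a chain of Cauchy morphisms of the doubled spacetimes, and it is precisely R1 --- invariance of the proposition under passage to Cauchy developments --- that carries the argument across each interpolating spacetime, allowing one to jump between its ``past'' and ``future'' collars, which need not be comparable as subsets but share the Cauchy development consisting of the entire component. The accompanying burden is to check that causal convexity, global hyperbolicity, finiteness of the number of components, and the orientation data are all preserved at every stage, so that one never leaves $\OO^{(2)}$ or $\Loc$.
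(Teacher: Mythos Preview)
Your proposal is correct and follows essentially the same route as the paper: localise via R2 to the disconnected spacetimes $\Mb|_{O_1\cup O_2}$ and $\widetilde{\Mb}|_{\widetilde{O}_1\cup\widetilde{O}_2}$, link them by a deformation chain of Cauchy morphisms built from the interpolation of \cite[Prop.~2.4]{FewVer:dynloc_theory}, and propagate the proposition along the chain by alternating R2 and R1; for part~(b), shrink inside each $O_i$ via R3 to a union of diamonds with $\RR^{n-1}$ Cauchy topology matching the component count of $\widetilde{O}_i$, then invoke part~(a). Your write-up is in fact more explicit than the paper's on several points---the construction of the chain as a disjoint union of two single-component chains, the precise role of R1 in identifying the Cauchy development of a Cauchy-morphism image with the full component, and the orientation remark in part~(b)---but the underlying argument is the same.
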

\begin{proof} The strategy for (a) is illustrated by Fig.~\ref{fig:rigidity},
in which the wavy line indicates a sequence of spacetimes
forming a deformation chain (cf.~Fig.~\ref{fig:deform})
\begin{equation}
\widetilde{\Mb}|_{\widetilde{O}_1\cup\widetilde{O}_2}
\xlongleftarrow{\widetilde{\psi}} \widetilde{\Lb} \xlongrightarrow{\widetilde{\varphi}} \Ib \xlongleftarrow{\varphi} \Lb 
\xlongrightarrow{\psi} \Mb|_{O_1\cup O_2}
\end{equation}
where $\psi,\widetilde{\psi},\varphi,\widetilde{\varphi}$ are Cauchy morphisms. By property R2, $P_\Mb(O_1,O_2)$ is equivalent to
$P_{\Mb|_{O_1\cup O_2}}(O_1,O_2)$, and likewise 
$P_{\widetilde{\Mb}}(\widetilde{O}_1,\widetilde{O}_2)$ is
equivalent to $P_{\widetilde{\Mb}|_{\widetilde{O}_1\cup \widetilde{O}_2}}(\widetilde{O}_1,\widetilde{O}_2)$. 
Writing $L_i$ and $I_i$ for the components of $\Lb$ and $\Ib$ corresponding to $O_1$ and $O_2$, and 
applying R1 and R2 repeatedly,  
\begin{align}
P_{\Mb|_{O_1\cup O_2}}(O_1,O_2)  & \xLongleftrightarrow{R1}
P_{\Mb|_{O_1\cup O_2}}(\psi(L_1),\psi(L_2))\\
& 
\xLongleftrightarrow[\psi]{R2} P_{\Lb}(L_1, L_2) 
\xLongleftrightarrow[\varphi]{R2} P_{\Ib}(\varphi(L_1), \varphi(L_2))  \xLongleftrightarrow{R1} P_{\Ib}(I_1, I_2)  \nonumber
\end{align}
and in the same way, $P_{\Ib}(I_1, I_2)$ is also equivalent to
$P_{\widetilde{\Mb}|_{\widetilde{O}_1\cup \widetilde{O}_2}}(\widetilde{O}_1,\widetilde{O}_2)$. Together with the equivalences
noted already, this completes the proof. 

For (b), we choose, for each $i=1,2$, a globally hyperbolic set $D_i$ contained in $O_i$
and with the same number of components as $\widetilde{O}_i$, and so that all its
components have Cauchy surface topology $\RR^{n-1}$. Using R3, $P_\Mb(D_1,D_2)$, and the result
follows by part (a).
\end{proof}

\begin{figure}
\begin{center}
\begin{tikzpicture}[scale=0.6]
\definecolor{Green}{rgb}{0,.80,.20}
\definecolor{Gold}{rgb}{.93,.82,.24}
\definecolor{Orange}{rgb}{1,0.5,0} 
\draw[fill=lightgray] (-5.5,0) -- ++(3.5,0) -- ++(0,4) -- ++(-3.5,0) -- cycle;
\draw[fill=lightgray] (3.5,0) -- ++(3.5,0) -- ++(0,4) -- ++(-3.5,0) -- cycle;

\draw[fill=Green] (-4.5,3) +(-0.5,0) -- +(0,-0.5) -- +(0.5,0) -- +(0,0.5) -- cycle;
\draw[fill=Green] (-3,3) +(-0.5,0) -- +(0,-0.5) -- +(0.5,0) -- +(0,0.5) -- cycle;

\draw[fill=Green] (-0,3) +(-0.5,0) -- +(0,-0.5) -- +(0.5,0) -- +(0,0.5) -- cycle;
\draw[fill=Green] (1.5,3) +(-0.5,0) -- +(0,-0.5) -- +(0.5,0) -- +(0,0.5) -- cycle;

\draw [decorate,decoration=snake] (-0.5,2) -- (2,2);

\draw[fill=Orange] (0,1) +(-0.5,0) -- +(0,-0.5) -- +(0.5,0) -- +(0,0.5) -- cycle;
\draw[fill=Orange] (1.5,1) +(-0.5,0) -- +(0,-0.5) -- +(0.5,0) -- +(0,0.5) -- cycle;

\draw[fill=Orange] (4.5,1) +(-0.5,0) -- +(0,-0.5) -- +(0.5,0) -- +(0,0.5) -- cycle;
\draw[fill=Orange] (6,1) +(-0.5,0) -- +(0,-0.5) -- +(0.5,0) -- +(0,0.5) -- cycle;

\draw[color=blue,line width=3pt,->] (2.25,1) -- (3.25,1); 
\draw[color=blue,line width=3pt,->] (-0.75,3) -- (-1.75,3); 
\node[anchor=north] at (5.5,0) {${\widetilde{\Mb}}$};
\node[anchor=north] at (-4,0) {${\Mb}$};
\node[anchor=north] at (-4.5,2.5) {${O_1}$};
\node[anchor=north] at (-3,2.5) {${O_2}$};
\node[anchor=south] at (4.5,1.5) {${\widetilde{O}_1}$};
\node[anchor=south] at (6,1.5) {${\widetilde{O}_2}$};
\node[anchor=south] at (0.75,3.5) {$\Mb|_{O_1\cup O_2}$};
\node[anchor=north] at (0.75,0) {$\widetilde{\Mb}|_{\widetilde{O_1}\cup \widetilde{O_2}}$};
\end{tikzpicture}
\end{center}
\caption{Schematic representation of the rigidity argument.}
\label{fig:rigidity}
\end{figure}
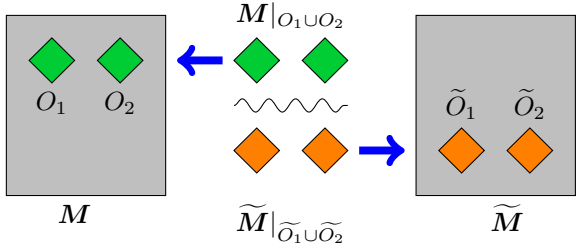

As a consequence, we see that the hypothesis that Einstein causality holds in one spacetime is not independent of it holding in another. 
This is a prototype for the spin--statistics connection that will
be described later, and is similar to the arguments used in~\cite{Verch01}. Related arguments apply to properties
such as  extended locality (see \cite{Schoch1968,Landau1969} for
the original definition) and the Schlieder property (see, likewise \cite{Schlieder:1969}) as described in~\cite{FewsterVerch_chapter:2015}.

\section{Framed spacetimes}\label{sect:FLoc}

The conventional account of theories with spin is phrased in terms of spin structures. Four dimensional globally hyperbolic spacetimes support a
unique spin bundle (up to equivalence) namely the trivial right-principal bundle $S\Mb=\Mb\times\SL(2,\CC)$~\cite{Isham_spinor:1978} and for simplicity we restrict to this situation. A spin structure $\sigma$
is a double cover from $S\Mb$ to the frame bundle $F\Mb$ over $\Mb$ that intertwines
the right-actions on $S\Mb$ and $F\Mb$: i.e., $\sigma\circ R_S =R_{\pi(S)}\circ \sigma$,
where $\pi:\SL(2,\CC)\to\Lc^\uparrow_+$ is the usual double cover. Pairs 
$(\Mb,\sigma)$ form the objects of a category $\SpLoc$, in which   
a morphism $\Psi:(\Mb,\sigma)\to (\Mb',\sigma')$ is a bundle morphism 
$\Psi:S\Mb\to S\Mb'$ which covers a $\Loc$-morphism $\psi:\Mb\to\Mb'$, i.e., 
$\Psi(p,S)=(\psi(p),\Xi(p)S)$ for some $\Xi\in C^\infty(\Mb;\SL(2,\CC))$, and obeys
$\sigma'\circ\Psi = \psi_*\circ\sigma$, where $\psi_*$ is the induced map of frame bundles 
arising from the tangent map of $\psi$. These structures provide the setting for the locally
covariant formulation of the Dirac field~\cite{Sanders_dirac:2010}, for instance. 
From an operational perspective, however, this account of spin  
it is not completely satisfactory,
because the morphisms are described at the level of the spin bundle, to which we do not have observational access, and are only fixed up to sign by the geometric map of spacetime manifolds.
To some extent, one has also introduced the understanding of spin by hand, as well, 
although this is reasonable enough when formulating specific models such as the Dirac field.

By contrast, the approach described here has a more operationally
satisfactory basis.  Instead of $\Loc$ or $\SpLoc$, we  work 
on a category of \emph{framed spacetimes} $\FLoc$ defined as follows.
An object of $\FLoc$ is a pair $\Mbb=(\Mc,e)$ where $\Mc$ is a smooth manifold
of fixed dimension $n$ on which $e=(e^\nu)_{\nu=0}^{n-1}$
is a global smooth coframe (i.e., an $n$-tuple of smooth everywhere linearly independent $1$-forms)
subject to the condition that $\Mc$, equipped with the metric, orientation and time-orientation
induced by $e$, is a spacetime in $\Loc$, to be denoted $\Lf(\Mc,e)$.
Here, the metric induced by $e$ is $\eta_{\mu\nu} e^\mu e^\nu$,
where $\eta=\text{diag}(+1,-1,\ldots,-1)$, while the orientation and time-orientation are fixed
by requiring $e^0\wedge\cdots \wedge e^{n-1}$ to be positively oriented, and $e^0$ to be future-directed. Similarly, 
a morphism $\psi:(\Mc,e)\to (\Mc',e')$ in $\FLoc$ is a smooth map between
the underlying manifolds inducing a $\Loc$-morphism $\Lf(\Mc,e)\to \Lf(\Mc',e')$ and obeying $\psi^*e' = e$. In this way, we obtain a forgetful functor $\Lf:\FLoc\to\Loc$. Moreover, $\FLoc$ is related to $\SpLoc$ by a  functor 
$\Sf:\FLoc\to\SpLoc$ defined by
\begin{equation}
\Sf(\Mc,e)=(\Lf(\Mc,e),(p,S)\mapsto R_{\pi(S)} e|_p^*),
\end{equation}
where $e|_p^*$ is the dual frame to $e$ at $p$,
and so that each $\FLoc$ morphism is mapped to a $\SpLoc$-morphism 
$\Sf(\psi)$ whose underlying bundle map is $\psi\times\id_{\SL(2,\CC)}$.
Essentially, $\Sf(\Mc,e)$ 
corresponds to the trivial spin structure associated to a frame~\cite{Isham_spinor:1978}, and we exploit the uniqueness of this spin structure to define the morphisms.  One may easily
see that $\Sf$ is a bijection on objects; however, there are morphisms in $\SpLoc$ that do not have precursors in $\FLoc$, which
involve local frame rotations.\footnote{Local frame rotations will
appear later on, but not as morphisms.}  Clearly, the composition of $\Sf$
with the obvious forgetful functor from $\SpLoc$ to $\Loc$ gives the functor $\Lf:\FLoc\to\Loc$.

The description of spacetimes in $\Loc$ represents a 
`rods and clocks' account of measurement.\footnote{One might be concerned that the assumption that global coframes exist is restrictive, as
it requires that $\Mc$ to be parallelizable. However, this presents no difficulties if $n=4$, because all four dimensional globally hyperbolic spacetimes are parallelizable. Conceivably, one could modify the set-up in general dimensions by working with local coframes, if it was felt necessary to include non-parallelizable spacetimes.}  However, we need to be clear that the coframe is not in itself physically significant,
by contrast to the metric, orientation and time-orientation it induces. In other
words, our description contains redundant information and we must take care
to account for the degeneracies we have introduced. This is not a bug, but a feature: it turns out
to lead to an enhanced understanding of what spin is.

In this new context, a locally covariant QFT should be a functor from $\FLoc$ to $\Alg$ (or some other category, e.g., $\CAlg$). 
Of course, any theory $\Af:\Loc\to\Alg$ induces such a functor, namely $\Af\circ\Lf:\FLoc\to\Alg$, 
and likewise every $\Bf:\SpLoc\to\Alg$ induces $\Bf\circ\Sf:\FLoc\to\Alg$, but not every theory need arise in this way. However, we need
to keep track of the redundancies in our description, namely the freedom to make \emph{global frame rotations}. These are represented as follows. 
To each $\Lambda\in \Lc^\uparrow_+$, there is a functor
$\Tf(\Lambda):\FLoc\to\FLoc$ 
\begin{equation}
\Tf(\Lambda)(\Mc,e) = (\Mc,\Lambda e), \qquad\text{where
$(\Lambda e)^\mu=
\Lambda^\mu_{\phantom{\mu}\nu} e^\nu$} \qquad(\Lambda\in\Lc^\uparrow_+)
\end{equation}
with action on morphisms uniquely fixed so that $\Lf\circ\Tf(\Lambda)=\Lf$. 
In this way, $\Lambda\mapsto\Tf(\Lambda)$ faithfully represents $\Lc^\uparrow_+$ in $\Aut(\FLoc)$. Moreover, any locally covariant theory
$\Af:\FLoc\to\Alg$, induces a family of theories
\begin{equation}
\Af\circ\Tf(\Lambda):\FLoc\to\Alg\qquad (\Lambda\in\Lc^\uparrow_+)
\end{equation}
which corresponds to applying the original theory $\Af$ to a frame-rotated 
version of the original spacetime. If we are to take seriously the idea
that frame rotations of this type carry no physical significance then these
theories should be equivalent. We formalise this in the following

\begin{axiom}[Independence of global frame rotations] \label{ax:global}
To each $\Lambda\in\Lc^\uparrow_+$, there exists an equivalence
$\eta(\Lambda): \Af\nto\Af\circ\Tf(\Lambda)$, such that
\begin{equation}\label{eq:gauge}
\eta(\Lambda)_{(\Mc,e)}\alpha_{(\Mc,e)}= 
\alpha_{(\Mc,\Lambda e)} \eta(\Lambda)_{(\Mc,e)} \qquad
(\forall\alpha\in\Aut(\Af))
\end{equation}
\end{axiom}
The condition \eqref{eq:gauge} asserts that the
equivalence implementing independence of global frame
rotations intertwines the action of global gauge transformations. 
Plausibly it might be relaxed (or modified) but it gives the cleanest
results, so will be maintained for now. Note that the 
equivalences $\eta(\Lambda)$ are not specified beyond this
requirement; what is important is that they exist. 
Obviously every theory induced from $\Loc$ (i.e., 
$\Af=\Bf\circ\Lf$, for some $\Bf:\Loc\to\Alg$)
obeys Axiom~\ref{ax:global}, simply by taking 
$\eta(\Lambda)$ to be the identity automorphism of $\Af$. 

The assumptions above have a number of consequences~\cite{Few_spinstats}. First, the $\eta(\Lambda)$
induce a $2$-cocycle of $\Lc^\uparrow_+$, 
taking values in the centre of the global gauge group
$\Zc(\Aut(\Af))$, and given by
\begin{equation}
\xi(\Lambda',\Lambda)_{(\Mc,e)} = \eta(\Lambda)^{-1}_{(\Mc,e)} \eta(\Lambda')^{-1}_{(\Mc,\Lambda e)} \eta(\Lambda'\Lambda)_{(\Mc,e)};
\end{equation}
furthermore, any other system of equivalences $\widetilde{\eta}(\Lambda): \Af\nto\Af\circ\Tf(\Lambda)$ obeying \eqref{eq:gauge} determines an equivalent
$2$-cocycle. We conclude that each theory $\Af:\FLoc\to\Alg$ obeying
Axiom~\ref{ax:global} determines a group cohomology class  $[\xi]\in H^2(\Lc^\uparrow_+;\Zc(\Aut(\Af)))$ in a canonical fashion. 

It is worth pausing to consider some sufficient conditions
for $[\xi]$ to be trivial. This occurs, for instance, whenever
$\Af$ is induced from a theory on $\Loc$, because 
we \emph{may} take $\eta(\Lambda)=\id_\Af$, giving 
$\xi(\Lambda, \Lambda')=\id_\Af$, and
any other choice gives an equivalent cohomologous $2$-cocycle.
Again, if $\Af$ has global gauge group with trivial centre, 
then $\xi$ has no choice but to be trivial. 

Next, the \emph{scalar} fields of the theory
form a vector space $\Fld(\Af)$ carrying an action of both the gauge group
\begin{equation}
(\alpha\cdot\Phi)_{(\Mc,e)} (f) = \alpha_{(\Mc,e)}\Phi_{(\Mc,e)}(f)
\qquad (\alpha\in\Aut(\Af))
\end{equation}
and the proper orthochronous Lorentz group $\Lc^\uparrow_+$
\begin{equation}
(\Lambda\star \Phi)_{(\Mc,\Lambda e)}(f) = \eta(\Lambda)_{(\Mc,e)}\Phi_{(\Mc,e)}(f)
\qquad (\Lambda\in\Lc^\uparrow_+).
\end{equation}
These two actions commute, and turn out to obey
\begin{equation}
(\Lambda'\Lambda)\star\Phi= \xi(\Lambda',\Lambda)\cdot
(\Lambda'\star(\Lambda\star\Phi)),
\end{equation}
which entails that irreducible subspaces of $\Fld(\Af)$ under the
action of $\Lc^\uparrow_+\times \Aut(\Af)$ carry multiplier representations of $\Lc^\uparrow_+$,  determined
by $\xi$. We deduce that the scalar fields form Lorentz and gauge multiplets (extending a result on gauge multiplets from~\cite{Fewster:gauge}). Further, all
multiplets in which the multiplier representation is continuous
(at least near the identity) must arise from
 \emph{true} real linear representations
of the covering group $\text{SL}(2,\CC)$, and are therefore
classified in the familiar way by pairs $(j,k)$ where $j,k$ are integer or half-integer spins. 
Accordingly our analysis has led to an emergent understanding of spin, 
and answers the question of why this is an appropriate 
physical notion in curved spacetimes. 

In certain cases, we may say more immediately. Any
theory induced from $\Loc$, or in which $\Zc(\Aut(\Af))$ is
trivial, can only support fields of integer-spin, because
$[\xi]$ is trivial. Similarly, all multiplets of observable fields are of integer spin, because $\xi$ is a global gauge transformation, and therefore
acts trivially on such fields. 

It seems remarkable that so much can be extracted from
the single Axiom~\ref{ax:global}, without the need to
specify what the equivalences $\eta(\Lambda)$ actually are. 
In order to prove the spin-statistics connection, however, 
it is convenient to be a bit more specific, and to connect
them to dynamics. This requires a generalization of the spacetime deformation techniques to $\FLoc$~\cite{Few_spinstats}. 

\begin{figure}
\begin{center}
\begin{tikzpicture}[scale=0.7]
\definecolor{Green}{rgb}{0,.80,.20}
\definecolor{Gold}{rgb}{.93,.82,.24}
\definecolor{Orange}{rgb}{1,0.5,0}
\draw[fill=lightgray] (-5,0) -- ++(2,0) -- ++(0,4) -- ++(-2,0) -- cycle;
\draw[fill=lightgray] (5,0) -- ++(2,0) -- ++(0,4) -- ++(-2,0) -- cycle;
\draw[fill=Gold] (5,3) -- ++(2,0) -- ++(0,0.5) -- ++(-2,0) -- cycle;
\draw[fill=Gold] (0,3) -- ++(2,0) -- ++(0,0.5) -- ++(-2,0) -- cycle;
\draw[fill=Gold] (-5,3) -- ++(2,0) -- ++(0,0.5) -- ++(-2,0) -- cycle;
\draw[fill=Gold] (5,0.5) -- ++(2,0) -- ++(0,0.5) -- ++(-2,0) -- cycle;
\draw[fill=Gold] (0,0.5) -- ++(2,0) -- ++(0,0.5) -- ++(-2,0) -- cycle;
\draw[fill=Gold] (-5,0.5) -- ++(2,0) -- ++(0,0.5) -- ++(-2,0) -- cycle;
\draw[color=red,thick] (5,3.25) -- ++(2,0);
\draw[color=red,thick] (5,0.75) -- ++(2,0);
\draw[color=red,thick] (-5,3.25) -- ++(2,0);
\draw[color=red,thick] (-5,0.75) -- ++(2,0);
\draw[color=blue,line width=4pt,->] (2.25,3.25) -- (4.75,3.25) node[pos=0.4,above]{$\iota^+[\widetilde{\Lambda}]$};
\draw[color=blue,line width=4pt,->] (2.25,0.75) -- (4.75,0.75) node[pos=0.4,above]{$\iota^-[\widetilde{\Lambda}]$};
\draw[color=blue,line width=4pt,->] (-0.25,3.25) -- (-2.75,3.25) node[pos=0.4,above]{$\iota^+$};
\draw[color=blue,line width=4pt,->] (-0.25,0.75) -- (-2.75,0.75) node[pos=0.4,above]{$\iota^-$};
\draw[fill=Orange] (6,2) ellipse (0.7 and 0.4);
\node at (6,2) {$\widetilde{\Lambda}$};
\node[anchor=north] at (6,0) {${(\Mc,\widetilde{\Lambda}e)}$};
\node[anchor=north] at (-4,0) {${(\Mc,e)}$};
\node[anchor=north] at (1,3) {${(\Mc^+,e)}$};
\node[anchor=north] at (1,0.5) {${(\Mc^-,e)}$};
\end{tikzpicture}
\end{center}
\caption{Schematic representation of the relative Cauchy 
evolution induced by a local frame rotation.}\label{fig:localrot}
\end{figure}
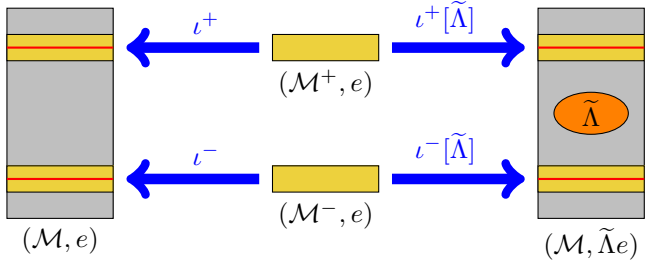

With this in mind, let us define $\FLoc$-Cauchy morphisms to
be $\FLoc$ morphisms $\psi$ whose image $\Lf(\psi)$ in $\Loc$
is Cauchy according to our earlier definition. Further, 
let us assume that $\Af:\FLoc\to\Alg$ has the timeslice property and so 
maps any $\FLoc$-Cauchy morphism to an isomorphism in $\Alg$. 
Fixing $(\Mc,e)\in\FLoc$, any $\widetilde{\Lambda}\in C^\infty(\Mc;\Lc^\uparrow_+)$ that is trivial outside a time-compact  set\footnote{That is, a set that lies to the
future of one Cauchy surface and the past of another.}
induces a relative Cauchy evolution, illustrated in Fig.~\ref{fig:localrot},
and given by
\begin{equation}
\rce_{(\Mc,e)}[\widetilde{\Lambda}] =
\Af(\iota^-)\circ\Af(\iota^-[\widetilde{\Lambda}])^{-1}
\circ\Af(\iota^+[\widetilde{\Lambda}])\circ\Af(\iota^+)^{-1}.
\end{equation} 
However, it would seem strange if a local frame rotation,
the effect of which is removed, could induce physical effects. 
Taking a more conservative stance, let us weaken that to
cover only frame rotations that can be deformed away
homotopically. It seems reasonable to posit:

\begin{axiom}[Independence of \alert{local} frame rotations]
\label{ax:local}
$\rce_{(\Mc,e)}[\widetilde{\Lambda}]=\id_{\Af(\Mc,e)}$  
for homotopically trivial $\widetilde{\Lambda}$. 
\end{axiom}

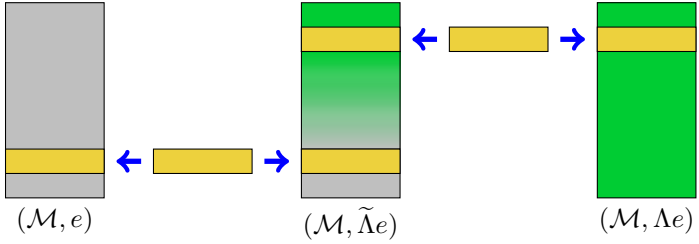
\begin{figure}
\begin{center}
\begin{tikzpicture}[scale=0.65]

\definecolor{Green}{rgb}{0,.80,.20}
\definecolor{Gold}{rgb}{.93,.82,.24}
\definecolor{Orange}{rgb}{1,0.5,0}
\draw[fill=lightgray] (-6,0) -- ++(2,0) -- ++(0,4) -- ++(-2,0) -- cycle;
\draw[fill=Gold] (-6,0.5) -- ++(2,0) -- ++(0,0.5) -- ++(-2,0) -- cycle;

\draw[fill=Green] (6,0) -- ++(2,0) -- ++(0,4) -- ++(-2,0) -- cycle;
\draw[fill=Gold] (6,3) -- ++(2,0) -- ++(0,0.5) -- ++(-2,0) -- cycle;

\draw[fill=lightgray] (0,0) -- ++(2,0) -- ++(0,1) -- ++(-2,0) -- cycle;
\draw[top color=Green,bottom color=lightgray] (0,1) -- ++(2,0) -- ++(0,2) -- ++(-2,0) -- cycle;
\draw[fill=Green] (0,3) -- ++(2,0) -- ++(0,1) -- ++(-2,0) -- cycle;
\draw[fill=Gold] (0,3) -- ++(2,0) -- ++(0,0.5) -- ++(-2,0) -- cycle;
\draw[fill=Gold] (0,0.5) -- ++(2,0) -- ++(0,0.5) -- ++(-2,0) -- cycle;

\draw[fill=Gold] (3,3) -- ++(2,0) -- ++(0,0.5) -- ++(-2,0) -- cycle;
\draw[fill=Gold] (-3,0.5) -- ++(2,0) -- ++(0,0.5) -- ++(-2,0) -- cycle;

\draw[color=blue,line width=2pt,->] (2.75,3.25) -- (2.25,3.25); 
\draw[color=blue,line width=2pt,->] (-0.75,0.75) -- (-0.25,0.75); 
\draw[color=blue,line width=2pt,->] (5.25,3.25) -- (5.75,3.25); 
\draw[color=blue,line width=2pt,->] (-3.25,0.75) -- (-3.75,0.75); 
\node[anchor=north] at (7,0) {${(\Mc,\Lambda e)}$};
\node[anchor=north] at (1,0) {${(\Mc,\widetilde{\Lambda}e)}$};
\node[anchor=north] at (-5,0) {${(\Mc,e)}$};
\end{tikzpicture}
\end{center}
\caption{Construction of the natural transformations $\zeta(S)$.}
\label{fig:rot}
\end{figure}

Axiom~\ref{ax:local} has an important consequence. 
Consider the chain of spacetimes illustrated in Fig.~\ref{fig:rot},
in which the morphisms illustrated are all Cauchy,
and $\widetilde{\Lambda}\in C^\infty(\Mc;\Lc^\uparrow_+)$ is equal to the identity in the past region 
and takes the constant value $\Lambda$ in the future region. 
Then the timeslice axiom induces an isomorphism 
$\Af(\Mc,e)\to\Af(\Mc,\Lambda e)$. Crucially, Axiom~\ref{ax:local} entails that the isomorphism
depends on $\widetilde{\Lambda}$ only via its homotopy class. 
Thus each $S$ in the universal cover $\widetilde{\Lc^\uparrow_+}$ of $\Lc^\uparrow_+$  induces isomorphisms
\begin{equation}
\zeta_{(\Mc,e)}(S): 
\Af(\Mc,e)\longrightarrow \Af(\Mc,\pi(S) e).
\end{equation}
Let us assume (although one might suspect this can be \emph{derived})
that the $\zeta_{(\Mc,e)}(S)$ cohere to give natural isomorphisms 
\begin{equation}
\zeta(S):\Af\nto \Af\circ\Tf(\pi(S)).
\end{equation}
We may now replicate our previous analysis, with $S\mapsto\zeta(S)$
in place of $\Lambda\mapsto\eta(\Lambda)$, leading to a $2$-cocycle of the universal cover of $\Lc^\uparrow_+$ in $\Aut(\Af)$ that is
trivial; indeed, one may show that 
\[
\zeta(S')_{(\Mc,\pi(S)e)} \zeta(S)_{(\Mc, e)}= \zeta(S'S)_{(\Mc, e)}
\qquad (S,S'\in\Lc^\uparrow_+).
\] 
In $n=4$ dimensions, we note that $\zeta(-1)$ is an automorphism
of $\Af$ (as $\pi(-1)=1$); moreover, it obeys  
\begin{equation}  
\zeta(-1)^2 = \zeta(1)= \id,
\end{equation}
which one might think of as a spacetime version of Dirac's belt trick.

It is important to connect our discussion of frame rotations with the familiar
implementation of the Lorentz group in Minkowski space. In our present setting, 
we define $n$-dimensional Minkowski space to be the object $\Mbb_0=(\RR^n,(dX^\mu)_{\mu=0}^{n-1})$, where $X^\mu:\RR^n\to\RR$ are the standard coordinate functions $X^\mu(x^0,\ldots,x^{n-1})=x^\mu$. Any 
$\Lambda\in \Lc^\uparrow_+$ induces an active Lorentz transformation $\Lambda:\RR^4\to\RR^4$
by matrix multiplication, $X^\mu \circ \Lambda =\Lambda^\mu_{\phantom{\mu}\nu}X^\nu$, 
which induces a morphism
\begin{equation}
\psi_\Lambda: \Mbb_0\to \Tf(\Lambda^{-1})(\Mbb_0)
\end{equation}
in $\FLoc$. One may verify that $\psi_{\Lambda'\Lambda}=\Tf(\Lambda^{-1})(\psi_{\Lambda'})\circ\psi_\Lambda$. Accordingly, we obtain an automorphism of
$\Af(\Mbb_0)$ for each $S\in\widetilde{\Lc^\uparrow_+}$ by 
\begin{equation}
\Xi(S)=\zeta(S)_{\Tf(\pi(S)^{-1})(\Mbb_0)}\circ\Af(\psi_{\pi(S)}).
\end{equation}
It may be checked that $\Xi(S'S)=\Xi(S')\circ\Xi(S)$ and that one has
\begin{equation}
\Xi(S)\Phi_{\Mbb_0}(f) = (S\star \Phi)_{\Mbb_0}(\pi(S)_* f)\qquad(f\in\CoinX{\RR^4})
\end{equation}
where we now extend the action on fields from the Lorentz group to its universal cover. 
In particular, note that any $2\pi$-rotation corresponds to
\begin{equation}\label{eq:twopi}
\Xi(-1)=\zeta(-1)_{\Mbb_0}.
\end{equation} 
Given a state $\omega_0$ on $\Af(\Mbb_0)$ that is invariant under the automorphisms, 
i.e., $\omega_0\circ\Xi(S)=\omega_0$ for all $S$, the corresponding GNS representation
will carry a unitary implementation of the $\Xi(S)$, which recovers the standard
formulation.

\section{Spin and Statistics in four dimensions}

We come to the proof of the spin--statistics connection~\cite{Few_spinstats}. 
As in \cite{Verch01}, the idea is to refer the statement in 
a general spacetime back to Minkowski space, where
standard spin--statistics results can be applied. In other
words, we apply a rigidity argument. The notion of
statistics employed is based on graded commutativity of local algebras at
spacelike separation. 

\begin{definition} An involutory global gauge transformation  $\gamma\in\Aut(\Af)$, $\gamma^2=\id$ is
 said to \emph{grade statistics} in $\Mbb$ if, for all spacelike separated regions $O_i\in\OO(\Mbb)$,
every component of which has Cauchy surface topology $\RR^3$, one has
\begin{equation}
A_1 A_2 = (-1)^{\sigma_1\sigma_2} A_2 A_1
\end{equation}
for all $A_i\in\Af^\kin(\Mbb;O_i)$ s.t., $\gamma_{\Mbb} A_i=(-1)^{\sigma_i}A_i$. 
\end{definition}

The standard spin--statistics connection, in view of \eqref{eq:twopi},
asserts that $\zeta(-1)$ grades statistics in Minkowski space $\Mbb_0$, where $\zeta(S)$ is defined
as in Sect.~\ref{sect:FLoc}. 

\begin{thm} 
If $\gamma$ grades statistics in $\Mbb_0$, then
it does so in every spacetime of $\FLoc$. 
Consequently, if the theory obeys the standard spin--statistics connection in Minkowski space, $\zeta(-1)$ grades statistics on every framed spacetime $\Mbb\in\FLoc$. 
\end{thm}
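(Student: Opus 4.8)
The plan is to run a rigidity argument of the kind in Section~\ref{sect:rigidity}, now formulated over $\FLoc$ and for the proposition expressing graded commutativity. For $\Mbb\in\FLoc$ and an ordered pair $\langle O_1,O_2\rangle$ of spacelike separated sets in $\OO(\Mbb)$, let $P_\Mbb(O_1,O_2)$ be true iff $A_1A_2=(-1)^{\sigma_1\sigma_2}A_2A_1$ for all $A_i\in\Af^\kin(\Mbb;O_i)$ with $\gamma_\Mbb A_i=(-1)^{\sigma_i}A_i$. The first task is to check that $(P_\Mbb)_{\Mbb\in\FLoc}$ satisfies the analogues of R1--R3. R3 is immediate from isotony of the kinematic net. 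R1 holds because the timeslice property of $\Af\colon\FLoc\to\Alg$ (applied to the $\FLoc$-Cauchy morphism $\Mbb|_O\to\Mbb|_{D_\Mbb(O)}$) forces $\Af^\kin(\Mbb;O)$ to depend only on the Cauchy development of $O$, exactly as in the $\Loc$ case, so the propositions related by R1 coincide.

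The one genuinely new ingredient, compared with the Einstein-causality warm-up, is R2. For an $\FLoc$-morphism $\psi\colon\Mbb\to\Mbb'$, the images $\psi(O_i)$ lie in $\OO(\Mbb')$, are again spacelike separated (causal convexity of the image of $\psi$), and have Cauchy surfaces diffeomorphic to those of $O_i$; covariance of the kinematic net gives $\Af^\kin(\Mbb';\psi(O_i))=\Af(\psi)(\Af^\kin(\Mbb;O_i))$. Because $\gamma$ is a \emph{global gauge transformation}, i.e.\ a natural transformation $\Af\nto\Af$, it preserves each kinematic algebra and $\gamma_{\Mbb'}\circ\Af(\psi)=\Af(\psi)\circ\gamma_\Mbb$; with the injectivity of $\Af(\psi)$ this shows $\Af(\psi)$ restricts to a bijection of the $(-1)^{\sigma}$-eigenspace of $\gamma_\Mbb$ in $\Af^\kin(\Mbb;O_i)$ onto that of $\gamma_{\Mbb'}$ in $\Af^\kin(\Mbb';\psi(O_i))$. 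Injectivity and multiplicativity of $\Af(\psi)$ then transport the graded commutation relation in both directions, yielding $P_\Mbb(O_1,O_2)\iff P_{\Mbb'}(\psi(O_1),\psi(O_2))$.

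Granting the $\FLoc$ analogue of Theorem~\ref{thm:rigidity} --- whose proof carries over once the spacetime-deformation techniques are extended to framed spacetimes, as indicated in Section~\ref{sect:FLoc} and carried out in~\cite{Few_spinstats} --- the conclusion is immediate from part~(b): $\gamma$ grading statistics in $\Mbb_0$ says that $P_{\Mbb_0}(O_1,O_2)$ holds for (in particular, some) pair of spacelike separated regions all of whose components have Cauchy surface topology $\RR^3$, so part~(b) gives $P_\Mbb(\widetilde O_1,\widetilde O_2)$ for every spacelike separated pair in every $\Mbb\in\FLoc$ all of whose components have Cauchy surface topology $\RR^3$ --- that is, $\gamma$ grades statistics in every $\Mbb\in\FLoc$. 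For the closing assertion, $\zeta(-1)\in\Aut(\Af)$ (since $\pi(-1)=1$) and $\zeta(-1)^2=\zeta(1)=\id$, so $\zeta(-1)$ is an involutory global gauge transformation; by~\eqref{eq:twopi} the standard spin--statistics connection in Minkowski space is precisely the statement that $\zeta(-1)$ grades statistics in $\Mbb_0$, and the first part then applies.

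I expect the main obstacle to be this $\FLoc$-deformation input behind Theorem~\ref{thm:rigidity}: morphisms of $\FLoc$ pull back coframes \emph{exactly}, so the $\Loc$ deformation argument cannot simply be imported --- on regions whose components have Cauchy surface topology $\RR^3$ one must exhibit interpolating framed spacetimes connecting $\Mbb|_{O_1\cup O_2}$ to its Minkowski counterpart through $\FLoc$-Cauchy morphisms, i.e.\ deform the metric and the coframe together while respecting the orientation and time-orientation constraints. Once this is in place, the rest is the bookkeeping above, of which R2 --- needing only that $\gamma$ is natural --- is the single point not already present in the Einstein-causality discussion.
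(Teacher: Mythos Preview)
Your proposal is correct and follows essentially the same route as the paper: define the graded-commutativity proposition $P_\Mbb(O_1,O_2)$, verify rigidity R1--R3 (with R2 via naturality of $\gamma$ and injectivity of $\Af(\psi)$ to identify the $\pm$-eigenspaces), and invoke the $\FLoc$ analogue of Theorem~\ref{thm:rigidity}. Your treatment is slightly more explicit than the paper's sketch---in particular you spell out the application of part~(b) and the closing assertion about $\zeta(-1)$---and you correctly identify the $\FLoc$-deformation input as the one nontrivial piece being imported from~\cite{Few_spinstats}.
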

\begin{proof} (Sketch)
For each $\langle O_1,O_2\rangle\in\OO^{(2)}(\Mb)$, let $P_{\Mbb}(O_1,O_2)$ be the statement that
\begin{equation}
A_1 A_2 = (-1)^{\sigma_1\sigma_2} A_2 A_1 \qquad 
\text{for all $A_i\in\Af^\kin(\Mbb;O_i)$ s.t., $\gamma_{\Mbb} A_i=(-1)^{\sigma_i}A_i$}
\end{equation}
We argue that the collection $(P_{\Mbb})_{\Mbb\in\FLoc}$ is rigid,
whereupon the result holds by a generalization of Theorem~\ref{thm:rigidity} to $\FLoc$. 
R1 and R3 hold for the same reasons used in Section~\ref{sect:rigidity}
for Einstein causality. For R2, we note that the subspaces 
\begin{equation}
\Af^\kin_\sigma(\Mbb;O) = \{A\in\Af^\kin(\Mbb;O):
\gamma_\Mbb A=(-1)^\sigma A\} \qquad \sigma\in\{0,1\}
\end{equation}
obey, for any $\psi:\Mbb\to\widetilde{\Mbb}$, 
\begin{equation}
\Af^\kin_\sigma(\widetilde{\Mbb};\psi(O)) =
\Af(\psi)(\Af^\kin_\sigma(\Mbb;O) )
\end{equation}
by naturality of $\gamma$ and injectivity of $\Af(\psi)$.
A further use of injectivity gives
\begin{equation}
P_{\widetilde{\Mbb}}(\psi(O_1),\psi(O_2)) \iff
P_{\Mbb}(O_1,O_2),
\end{equation} 
thus establishing R2 and concluding the proof.
\end{proof}

What is really being proved is the connection between the statistics grading
in Minkowski space and that in arbitrary spacetimes. Thus, a locally covariant
theory that violates the standard spin-statistics connection in Minkowski space (e.g., 
a ghost theory) but in which the statistics grading is still implemented (in Minkowski) by an
involutory gauge transformation, would be covered by our result - the statistics
would be consistently graded in all spacetimes by the same gauge transformation.

\section{Summary and Outlook}

The BFV paper \cite{BrFrVe03} is subtitled `A new paradigm for local quantum physics', and indeed their paper marked  
the beginnings of a full development of a model-independent account of QFT in CST,
the current state of which is described in more detail in~\cite{FewsterVerch_chapter:2015}. 
At the heart of this approach is the fact that local covariance is
a surprisingly rigid structure, which makes it possible to transfer certain
results from the flat spacetime situation into general curved spacetimes
in a fairly systematic way. This is a consequence of the timeslice
property and also the structure of the categories $\Loc$ and
$\FLoc$. 

In this contribution, I have focussed particularly on the spin-statistics
connection, which was one of the starting points for the general theory. 
I have described a new viewpoint, based on framed spacetimes, 
that gives a more operational starting point for the discussion of 
spin in locally covariant QFT, without making reference to
unobservable geometric structures such as spin bundles. 
Instead, by recognizing that we make physical measurements
using frames, and by tracking the 
concomitant redundancies, we are led naturally to a description
that allows for spin. In our discussion, the relative Cauchy evolution, 
which plays an important role in locally covariant physics on $\Loc$, 
is developed further so as to cater for deformation of the framing, 
rather than just of the metric. 

Certain issues remain to be understood. 
Our view of statistics has focussed on graded commutativity
at spacelike separation;
it is not currently clear how to make contact with 
the occurrence of braid statistics in low dimensions.
The coframed spacetimes we consider are necessarily parallelizable;
while this is not a restriction in four spacetime dimensions, one could
seek generalizations that accommodate nonparallelizable spacetimes of other dimensions.  
Finally, neither the result described here, nor Verch's result
~\cite{Verch01}, gives a direct proof of the spin-statistics
connection in curved spacetime; both rely on the classic results of Minkowski space QFT. 
Now a proof is a proof, and perhaps one should not 
complain too much, because it may be that a direct argument
would be considerably more involved than those we now have. 
Nonetheless, arguments that provide more insight into the 
nature of the spin-statistics connection are still desirable
and it is hoped that the more operational account of spin
presented here can be a further step along that path. 


\subsection*{Acknowledgment} 
I thank the organisers and participants of the \emph{Quantum Mathematical Physics}
conference in Regensburg (2014) for their interest and comments, and also the 
various sponsoring organizations of the meeting for financial support.

\end{document}